\newtheorem{thm}{Theorem}[section]
\newtheorem{prop}{Proposition}[section]
\theoremstyle{remark}
\newtheorem{remark}{Remark}[section]
\numberwithin{equation}{section}
\def\tr{\textmd{tr}}
\def\R{\mathbb{R}}
\def\R{\mathbb{R}}
\def\vh{\vspace{.2cm}}
\def\p{\partial}
\def\m{\mathfrak{m}}
\def\Mms{M_{_{MS}}}
\def\gms{g_{_{MS}}}
\newcommand{\be}{\begin{equation}}
\newcommand{\ee}{\end{equation}}
\newcommand{\bee}{\begin{equation*}}
\newcommand{\eee}{\end{equation*}}
\def\mb{\mathfrak{m}_{_B}}
\begin{document}

\title{Bartnik mass via vacuum extensions}

\author{Pengzi Miao}
\address[Pengzi Miao]{Department of Mathematics, University of Miami, Coral Gables, FL 33146, USA}
\email{pengzim@math.miami.edu}

\author{Naqing Xie}
\address[Naqing Xie]{School of Mathematical Sciences, Fudan
University, Shanghai 200433, China}
\email{nqxie@fudan.edu.cn}

\dedicatory{Dedicated to Luen-Fai Tam on the occasion of his 70th birthday.}

\thanks{The first named author's research was partially supported by NSF grant DMS-1906423. 
The second named author's research was partially supported by National Natural Science Foundation of China \#11671089.
}

\subjclass[2010]{Primary 53C20; Secondary 83C99}

\begin{abstract}
We construct asymptotically flat, scalar flat extensions of Bartnik data $(\Sigma, \gamma, H)$,  where
$ \gamma $ is a metric of positive Gauss curvature on a two-sphere $ \Sigma$, and $H $ is a function
that is either  positive or identically zero on $\Sigma$, such that the mass of the extension can be made  
arbitrarily close to the half area radius of $(\Sigma, \gamma)$.

In the case of $ H \equiv 0 $, the result gives an analogue of a theorem of Mantoulidis and Schoen \cite{M-S},
but with extensions that have vanishing scalar curvature.
In the context of initial data sets in general relativity, the result  produces  asymptotically flat, time-symmetric, vacuum initial data with an apparent horizon $(\Sigma, \gamma)$, for any metric $\gamma$ with positive Gauss curvature, such that the mass of the initial data is arbitrarily close to the optimal value in the Riemannian Penrose inequality.

The method we use is the Shi-Tam type metric construction from \cite{ShiTam02} and a 
refined Shi-Tam monotonicity, found by the first named author in \cite{Miao09}.

\end{abstract}

\keywords{quasi-local mass, scalar curvature, vacuum initial data}

\maketitle

\markboth{Pengzi Miao and Naqing Xie}{Bartnik mass via vacuum extensions}

\section{Introduction}

Let $ \Sigma$ be a two-sphere. Given a Riemannian metric $\gamma $ and a function $H $ on $\Sigma$, the Bartnik quasi-local mass \cite{Bartnik-qlmass} of the triple $(\Sigma, \gamma, H)$ can be defined as
\be
\mb(\Sigma , \gamma ,H) = \inf \left \lbrace \m (M, g) \,\vert \,(M, g) \text{ is an admissible extension of }(\Sigma, \gamma,H)\right \rbrace.
\ee
Here $ \m (\cdot)$ is the ADM mass \cite{ADM}, and
an asymptotically flat Riemannian $3$-manifold $(M, g)$ with boundary $\p M$ is  an {admissible extension of $(\Sigma, \gamma,H)$} if
\begin{itemize}
\item[i)]  $g$  has nonnegative scalar curvature;
\item[ii)]   $\p M$ with the induced metric is isometric to  $(\Sigma, \gamma)$ and, under the isometry, the mean curvature of $\p M$ in
$(M, g $) equals $H$; and
\item[ iii)]  $(M, g)$ satisfies certain non-degeneracy condition that prevents $\m (M, g)$ from being arbitrarily small; for instance,
it is often required that $(M, g)$  contains no closed minimal surfaces (enclosing $\p M$), or $\p M$ is outer-minimizing in $(M, g)$.
\end{itemize}
We refer readers to  \cite{Anderson-Jauregui, Jauregui-18, McCormick-18} for discussion on the numerous variations in the definition of Bartnik mass, and the recent progress on reconciling them.

Let $ K_\gamma $ denote the Gauss curvature of $(\Sigma, \gamma)$. If $ K_\gamma > 0 $, by
the work of Nirenberg \cite{Nirenberg} (and also  Pogorelov \cite{Pogorelov}), $(\Sigma, \gamma)$
admits an isometric embedding into $ \R^3$ as a convex surface, unique up to  rigid motions.
Let $ H_0 $ be the mean curvature of such an isometric embedding. In \cite{ShiTam02}, 
as a key step in their proof of the positivity of the Brown-York mass \cite{BY1, BY2}, 
Shi and Tam proved the following:

\begin{thm}[Shi-Tam \cite{ShiTam02}] \label{thm-S-T}
Suppose $K_\gamma > 0 $. Identify  $ \Sigma$ with  the image of
the isometric embedding of $(\Sigma, \gamma)$ in $ \R^3$ and
write the Euclidean metric $g_{_E} = d r^2 + g_r $ on  $E$ that  is the exterior of $\Sigma$.
Here $ g_r$ is the induced metric on the surface $ \Sigma_r $ that is $r$-distance away from $ \Sigma$ in $(E, g_{_E})$.
Given any function $ H > 0 $ on $ \Sigma$, there is a unique function $ u > 0 $ on $ E $ such that the metric $ g_u = u^2 d \rho^2 + g_\rho $
is asymptotically flat  and
\begin{itemize}
\item $g_u$ has zero scalar curvature;
\item
the mean curvature of $ \Sigma = \p E $ in $(E, g_u)$ equals $ H$; and
\item the ADM mass $ \m (E, g_u)$ satisfies
$$
 \m (E, g_u) \le \frac{1}{8 \pi} \int_{\Sigma} ( H_0 - H) \, d \sigma ,
$$
where $ d \sigma$ is the area measure on $(\Sigma, \gamma)$.
\end{itemize}
Consequently, since  $(E, g_u)$   is foliated by a $1$-parameter family of surfaces $\{ \Sigma_r \}_{r \ge 0}$ with positive mean curvature,
it follows that
\be \label{eq-mb-mby}
 \mb(\Sigma, \gamma, H) \le \frac{1}{8 \pi} \int_\Sigma ( H_0 - H) \, d \sigma
\ee
 regardless of the non-degeneracy condition iii) used in the definition of $\mb(\Sigma, \gamma, H)$.
\end{thm}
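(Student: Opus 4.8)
The plan is to realize $g_u$ by solving a quasilinear parabolic equation along the foliation $\{\Sigma_r\}$ and then to bound the ADM mass by a monotone quantity. First I would fix the ansatz $g_u = u^2\,dr^2 + g_r$ and compute its scalar curvature from the Gauss equation together with the radial Riccati equation, using that in $(E,g_{_E})$ the leaf $\Sigma_r$ has outward mean curvature $H_r$, Gauss curvature $K_r$, and second fundamental form $A_r$ obeying the Euclidean structure relations $\p_r H_r = -|A_r|^2$, $\p_r\,d\sigma_r = H_r\,d\sigma_r$, and $H_r^2 - |A_r|^2 = 2K_r$. A direct calculation then shows that $R(g_u)=0$ is equivalent to
\begin{equation}
H_r\,\frac{\p u}{\p r} = u^2\,\Delta_{g_r} u + K_r\,(u - u^3),
\end{equation}
which is (quasilinear) parabolic in $r$ because $H_r>0$ and $K_r>0$ for the outward parallel surfaces of the convex $\Sigma$. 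The boundary condition enters as initial data: since the mean curvature of $\Sigma_r$ in $g_u$ is $H_r/u$, prescribing it to equal $H$ at $r=0$ forces $u|_{r=0} = H_0/H > 0$.

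The hard part, and the main obstacle, is the global analysis of this equation: I would need long-time existence of a positive solution on $[0,\infty)$ together with precise asymptotics guaranteeing that $g_u$ is asymptotically flat with a well-defined ADM mass. Because the leaves $\Sigma_r$ converge to large round spheres, I expect uniform positive upper and lower bounds on $u$ from the maximum principle, by comparing $u$ against rotationally symmetric (Schwarzschild-type) sub- and super-solutions of the same equation, and then to bootstrap to an expansion $u = 1 + O(r^{-1})$ sufficient to read off the mass. This analytic step, rather than the monotonicity, is where the real work lies.

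Granting the solution, I would introduce the quasi-local quantity
\begin{equation}
m(r) = \frac{1}{8\pi}\int_{\Sigma_r} H_r\Big(1 - \frac{1}{u}\Big)\,d\sigma_r = \frac{1}{8\pi}\int_{\Sigma_r}\big(H_r - H_r/u\big)\,d\sigma_r ,
\end{equation}
the integrand being the difference of the Euclidean mean curvature of $\Sigma_r$ and its mean curvature in $(E,g_u)$. At $r=0$ this equals $\tfrac{1}{8\pi}\int_\Sigma (H_0 - H)\,d\sigma$, while the asymptotics give $\lim_{r\to\infty} m(r) = \m(E,g_u)$. Differentiating and substituting $\p_r H_r = -|A_r|^2$, the evolution equation for $u$, and $H_r^2 - |A_r|^2 = 2K_r$, and discarding $\int_{\Sigma_r}\Delta_{g_r} u\,d\sigma_r = 0$, everything collapses to the clean identity
\begin{equation}
\frac{d}{dr}\,m(r) = -\frac{1}{8\pi}\int_{\Sigma_r} K_r\,\frac{(u-1)^2}{u}\,d\sigma_r \le 0,
\end{equation}
the \emph{refined Shi--Tam monotonicity} of \cite{Miao09}; here positivity of $K_r$ is exactly what supplies the sign. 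Combining monotonicity with the two endpoint values yields $\m(E,g_u) = \lim_{r\to\infty} m(r) \le m(0) = \tfrac{1}{8\pi}\int_\Sigma (H_0 - H)\,d\sigma$.

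Finally, for the consequence \eqref{eq-mb-mby}, I would note that each leaf has strictly positive mean curvature $H_r/u>0$ in $(E,g_u)$, so the foliation is strictly mean-convex. A standard maximum-principle argument then shows that $\Sigma = \p E$ is outer-minimizing and that $E$ contains no closed minimal surface enclosing $\Sigma$; hence $(E,g_u)$ is an admissible extension under either form of the non-degeneracy condition iii), and the mass bound just established passes to the infimum defining $\mb(\Sigma,\gamma,H)$, giving \eqref{eq-mb-mby} independently of the convention chosen in iii).
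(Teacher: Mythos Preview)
The paper does not prove Theorem~\ref{thm-S-T}; it is quoted as a known result of Shi and Tam \cite{ShiTam02}.  Your sketch is a faithful outline of their argument: the parabolic equation for $u$, the initial condition $u|_{r=0}=H_0/H$, the long-time existence via maximum-principle bounds and convexity of the leaves, the monotone Brown--York type quantity $m(r)$, and its endpoint evaluation all match the Shi--Tam scheme, and your derivative identity
\[
\frac{d}{dr}\,m(r)=-\frac{1}{8\pi}\int_{\Sigma_r}K_r\,\frac{(u-1)^2}{u}\,d\sigma_r
\]
is the correct explicit form of their monotonicity.

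One terminological correction: what you compute is the \emph{original} Shi--Tam monotonicity from \cite{ShiTam02} (their Lemma~4.2), not the refined version of \cite{Miao09}.  The latter, recorded in the paper as Proposition~\ref{prop-LRPI}, replaces the Euclidean background by a Schwarzschild background $g^b_m$ and tracks
\[
\int_{S_\rho}(H^b_m-H_u)\,N(\rho)\,d\sigma_\rho,\qquad N(\rho)=\sqrt{1-\tfrac{2m}{\rho}},
\]
which is monotone for every $m\le\tfrac12 r$; optimizing over $m$ is exactly what yields the sharper area bound \eqref{eq-mass-gu-refined} needed for Theorems~\ref{thm-main} and~\ref{thm-main-H-zero}.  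For Theorem~\ref{thm-S-T} itself only the $m=0$ case (your computation) is required, so your argument is complete as stated, but you should relabel the citation accordingly.
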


In this paper, we apply the method of Shi-Tam in \cite{ShiTam02} and its variation by the first named author in \cite{Miao09}
to exhibit suitable  extensions of $(\Sigma, \gamma, H)$ such that the ADM mass of these extensions
is controlled by $ |\Sigma|_\gamma$, the area of $\gamma$.

\begin{thm} \label{thm-main}
Let $ \gamma $ be a metric with $ K_\gamma >  0 $ and $ H$ be a positive function on $ \Sigma$.
Given any $ \epsilon > 0 $, there exists an asymptotically flat manifold $(M, g) $, diffeomorphic to $ \Sigma \times [1, \infty)$, 
such that
\begin{itemize}
\item[i)]  $ g $  has zero  scalar curvature;
\item[ii)]   $\p M$ with the induced metric is isometric to  $(\Sigma, \gamma)$ and the mean curvature of $\p M$ in
$(M, g $) equals $H$ under the isometry;
\item[ iii)]  $(M, g)$  is foliated by a $1$-parameter family of closed surfaces
with positive mean curvature; and
\item[iv)]
the mass of  $ (M, g)$ satisfies
\be \label{eq-mass-M-S-main}
\m (M, g) \le \sqrt{ \frac{ | \Sigma |_{\gamma} }{16 \pi } } + \epsilon .
\ee
\end{itemize}
Consequently, the Bartnik mass of $(\Sigma, \gamma, H)$ satisfies
\be \label{eq-mb-half-r}
 \mb(\Sigma, \gamma, H)  \le \sqrt{ \frac{ | \Sigma |_{\gamma} }{16 \pi } } .
\ee
\end{thm}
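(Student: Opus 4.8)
The plan is to realize $(M,g)$ as the union of two scalar-flat pieces: an inner \emph{transition collar} $\Sigma\times[1,2]$ joining the prescribed data $(\Sigma,\gamma,H)$ to round data of nearly the same area, followed by an outer region obtained from the Shi-Tam construction over a round sphere. Since the ADM mass depends only on the asymptotic end, and the Shi-Tam extension of a round metric is a Schwarzschild exterior, the mass of $(M,g)$ is governed by the area $A:=|\Sigma_2|$ reached at the outer end of the collar. Concretely, for a round sphere of area $A$, set $r_A=\sqrt{A/(4\pi)}$; the Schwarzschild exterior matching constant mean curvature $h\in(0,2/r_A]$ has mass $\tfrac{r_A}{2}\big(1-(hr_A/2)^2\big)\le\tfrac{r_A}{2}=\sqrt{A/(16\pi)}$, so it suffices to drive $A$ to within $O(\epsilon)$ of $|\Sigma|_\gamma$ while keeping the outer data round.

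For the collar I would follow the Mantoulidis-Schoen philosophy but enforce \emph{zero} scalar curvature through the Shi-Tam mechanism. First I would choose a smooth path $\{\gamma(t)\}_{t\in[1,2]}$ with $\gamma(1)=\gamma$ and $\gamma(2)$ round, lying in the path-connected class of metrics with $\lambda_1(-\Delta_{\gamma(t)}+K_{\gamma(t)})>0$ (a class containing $\gamma$ since $K_\gamma>0$), arranged so the area element grows pointwise, $\tr_{\gamma(t)}\dot\gamma(t)>0$. Writing $g=u^2\,dt^2+\gamma(t)$, the leaves $\Sigma_t$ have mean curvature $H_u=\tfrac{1}{2u}\tr_{\gamma(t)}\dot\gamma(t)>0$, so the positive-mean-curvature foliation in iii) is automatic, and the area gain $|\Sigma_2|-|\Sigma_1|=\int_1^2\!\!\int_{\Sigma_t}\tfrac12\tr_{\gamma(t)}\dot\gamma(t)\,d\sigma\,dt$ depends only on the path and can be made smaller than any prescribed $\delta>0$. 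Imposing $R_g\equiv0$ becomes a parabolic equation for $u$ in $t$, which I would solve with the initial condition $u(1)=\tfrac{1}{2H}\tr_{\gamma(1)}\dot\gamma(1)$ so that the inner mean curvature equals $H$; the condition $\lambda_1>0$ is what should force the solution to stay positive, exactly as it forces nonnegative scalar curvature in \cite{M-S}.

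To estimate the mass I would then invoke the refined Shi-Tam monotonicity of \cite{Miao09}. The original Shi-Tam quantity $\tfrac{1}{8\pi}\int_{\Sigma_t}(H_0-H_u)\,d\sigma$ only delivers the Brown-York bound, which on the round outer sphere equals $r_A-\tfrac12 r_A^2 h$ and is $\le\sqrt{A/(16\pi)}$ only when $h\ge 1/r_A$ — too weak in general. The refinement supplies a sharper monotone quantity $\tilde m(t)$ along the foliation that is non-increasing, converges to $\m(M,g)$, and agrees on a round slice with the Hawking mass $\tfrac{r_A}{2}\big(1-\tfrac{1}{16\pi}\int_{\Sigma_t}H_u^2\big)$. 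Evaluated at $t=2$ this is $\le\sqrt{A/(16\pi)}$ irrespective of $h$, whence $\m(M,g)\le\sqrt{A/(16\pi)}\le\sqrt{|\Sigma|_\gamma/(16\pi)}+\epsilon$ once $\delta$ is chosen small.

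I expect the main obstacle to be the collar: solving the Shi-Tam parabolic equation globally on $[1,2]$ with $u>0$ when the leaf metric is genuinely non-round and only $\lambda_1>0$ (not $K>0$) is available, while simultaneously steering the outer data to be round with an area gain as small as desired. The second delicate point is closing the estimate, since the naive Shi-Tam/Brown-York monotone quantity is not sharp and everything rests on the refinement of \cite{Miao09} returning the Hawking-mass value on the round end. Once such a collar exists, each $(M,g)$ is admissible — it has zero (hence nonnegative) scalar curvature, the correct boundary data, and a positive-mean-curvature foliation precluding enclosing minimal surfaces — so $\mb(\Sigma,\gamma,H)\le\m(M,g)$, and letting $\epsilon\to0$ yields \eqref{eq-mb-half-r}.
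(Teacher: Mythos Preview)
Your proposal isolates the two correct ingredients—the Shi-Tam parabolic construction for scalar-flat extensions and the refined monotonicity of \cite{Miao09}—and the scheme would work, but the paper's packaging is cleaner and sidesteps the obstacles you flag. Rather than a collar glued to a separate Shi-Tam exterior, the paper uses a \emph{single} background $g^b=d\rho^2+\rho^2 g_\rho$ on $[1,\infty)\times\Sigma$, where $g_\rho=g(t_\delta(\rho))$ runs along an \emph{area-preserving} path ($\tr_{g(t)}g'(t)=0$) from $\gamma$ to the round metric $\sigma$, completing the transition by $\rho=1+\delta$. The $\rho^2$ factor supplies all the area growth, so $H^b=2\rho^{-1}$ identically, and on $[1+\delta,\infty)$ the background is exactly Euclidean; one solves the Shi-Tam PDE once on all of $[1,\infty)$, with no gluing seam and no need to steer the outgoing mean curvature. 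Existence of $u>0$ uses $K_{g(t)}>0$ along the path (via \cite{EMW09}), which is available since $K_\gamma>0$; your proposed weakening to $\lambda_1>0$ is both unnecessary here and, as you suspect, not covered by the known existence theory. Your version also carries a smoothness issue at the seam $t=2$ that the unified construction avoids entirely.

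Two minor corrections that do not affect the final bound: the Shi-Tam extension of a round sphere is Schwarzschild only when the prescribed mean curvature is \emph{constant}, which the output $H_u$ of your collar will not be, so your first mass computation via the explicit Schwarzschild formula does not apply; and the \cite{Miao09} estimate on a round slice is $\sqrt{|S_r|/(16\pi)}\bigl(1-\tfrac{1}{16\pi|S_r|}(\int_{S_r} H_u\,d\sigma_r)^2\bigr)$, involving $(\int H_u)^2$ rather than $\int H_u^2$, so it is not literally the Hawking mass. Either way, dropping the $H_u$ term yields $\m(g_u)<\sqrt{|\Sigma_{1+\delta}|/(16\pi)}=\tfrac12(1+\delta)$, and letting $\delta\to0$ finishes the proof exactly as you outline.
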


Theorem \ref{thm-main} also holds if $H$ is identically zero. 
We state this case separately 
since the extensions  in this setting represent  time-symmetric,
vacuum initial data with apparent horizon boundary.

\begin{thm} \label{thm-main-H-zero}
Let $ \gamma $ be a metric with $ K_\gamma >  0 $ on $\Sigma$.
Given any $ \epsilon > 0 $, there exists an asymptotically flat manifold $(M, g) $, diffeomorphic to $ \Sigma \times [1, \infty)$, 
such that
\begin{itemize}
\item[i)]  $ g $  has zero  scalar curvature;
\item[ii)]   $\p M$ with the induced metric is isometric to  $(\Sigma, \gamma)$ and 
$\p M$ has zero mean curvature in $(M, g)$; 
\item[ iii)]  the interior of $(M, g)$  is foliated by a $1$-parameter family of closed surfaces
with positive mean curvature; and
\item[iv)]
the mass of  $ (M, g)$ satisfies
\be \label{eq-mass-M-S-main-H-zero}
\m (M, g) \le \sqrt{ \frac{ | \Sigma |_{\gamma} }{16 \pi } } + \epsilon .
\ee
\end{itemize}
Consequently, the Bartnik mass of $(\Sigma, \gamma, 0)$ satisfies
\be \label{eq-mb-half-r-H-zero}
 \mb(\Sigma, \gamma, 0)  \le \sqrt{ \frac{ | \Sigma |_{\gamma} }{16 \pi } } .
\ee
\end{thm}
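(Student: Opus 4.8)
The plan is to deduce Theorem \ref{thm-main-H-zero} from Theorem \ref{thm-main} by prepending to the extension a scalar-flat ``horizon collar'' that converts a positive mean curvature inner boundary into a genuine minimal boundary. Concretely, I would first reduce the problem to constructing, for a suitable metric $\gamma_1$ with $K_{\gamma_1} > 0$ and $| \Sigma |_{\gamma_1}$ as close to $| \Sigma |_\gamma$ as desired, together with a positive function $H_1$, a scalar-flat metric on the collar $\Sigma \times [0,1]$ whose inner boundary $\Sigma \times \{0\}$ is isometric to $(\Sigma, \gamma)$ and has vanishing mean curvature, and whose outer boundary $\Sigma \times \{1\}$ is isometric to $(\Sigma, \gamma_1)$ with mean curvature $H_1$. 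Granting such a collar, I would glue to $\Sigma \times \{1\}$ the extension produced by Theorem \ref{thm-main} for the data $(\Sigma, \gamma_1, H_1)$. Since the induced metrics and the mean curvatures agree along the gluing surface, the resulting manifold $(M, g)$, diffeomorphic to $\Sigma \times [1, \infty)$, carries a Lipschitz metric whose distributional scalar curvature vanishes, has a minimal boundary isometric to $(\Sigma, \gamma)$, and has ADM mass equal to that of the Theorem \ref{thm-main} piece; choosing $| \Sigma |_{\gamma_1}$ close enough to $| \Sigma |_\gamma$ and the error in Theorem \ref{thm-main} small then yields $\m(M, g) \le \sqrt{ | \Sigma |_\gamma / (16\pi)} + \epsilon$, and hence \eqref{eq-mb-half-r-H-zero}.

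For the collar itself I would use the Shi-Tam metric ansatz $g = u^2\, dt^2 + \gamma_t$, where $t \mapsto \gamma_t$ is a prescribed smooth path of metrics with $K_{\gamma_t} > 0$ joining $\gamma_0 = \gamma$ to $\gamma_1$. The second fundamental form and mean curvature of the slices are $A = \frac{1}{2u}\,\partial_t \gamma_t$ and $H = \frac{1}{2u}\,\mathrm{tr}_{\gamma_t}\partial_t \gamma_t$, and the condition $R_g = 0$ becomes a forward parabolic equation for the lapse $u$ along the foliation, exactly of the type solved in \cite{ShiTam02} and \cite{Miao09}. The minimality of the inner slice is arranged purely kinematically: I would choose the path so that $\partial_t \gamma_t|_{t=0}$ is trace-free with respect to $\gamma$, which forces $H \equiv 0$ on $\Sigma \times \{0\}$ independently of $u$; away from $t = 0$ the trace of $\partial_t \gamma_t$ is kept positive, so that the interior slices are expanding and carry positive mean curvature, giving the foliation required in item iii). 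Because the collar may be taken thin and $\gamma_1$ close to $\gamma$, the induced area and Gauss curvature of the outer slice stay close to those of $(\Sigma, \gamma)$, so $K_{\gamma_1} > 0$ and $| \Sigma |_{\gamma_1} \approx | \Sigma |_\gamma$ as required above, and the refined Shi-Tam monotonicity of \cite{Miao09} controls the relevant mass quantity across the collar.

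The main obstacle is the degeneracy of the lapse equation at the minimal end: as $H \to 0$ the coefficient $\mathrm{tr}_{\gamma_t}\partial_t \gamma_t$ governing parabolicity vanishes, and the Shi-Tam lapse, which scales like the ratio of the embedding mean curvature to $H$ along the standard construction, tends to blow up there. I expect to treat this by modeling the innermost portion of the collar on the exact Schwarzschild neck --- whose coordinate spheres foliate a scalar-flat metric with a round minimal boundary and mean curvature vanishing to first order --- and either solving the lapse equation outward from a slice of small positive mean curvature while passing to the limit as that mean curvature tends to $0$, or by a direct barrier argument establishing existence and the needed boundary regularity of $u$ up to $t = 0$. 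A secondary technical point is to ensure genuine, not merely distributional, scalar flatness of $g$: since the gluing produces at worst a surface across which the metric is Lipschitz with matching mean curvatures, I would match the collar and the Theorem \ref{thm-main} data to second order, or apply a scalar-curvature-preserving mollification, so that conclusion i) holds with $R_g = 0$ while conclusions ii)--iv) are unaffected.
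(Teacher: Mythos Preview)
Your plan departs from the paper's and carries a genuine obstruction at the collar step. You force the inner slice to be minimal \emph{kinematically}, by taking $\mathrm{tr}_{\gamma_t}\partial_t\gamma_t|_{t=0}=0$; but then the background mean curvature $H^b$ itself vanishes at $t=0$, and the Shi--Tam lapse equation
\[
H^b\,\partial_t u \;=\; u^2\Delta_{\gamma_t} u \;-\; K_{\gamma_t} u^3 \;+\; \tfrac12\bigl[(H^b)^2+|A^b|^2+2\,\partial_t H^b\bigr]\,u
\]
loses its evolution character there: at $t=0$ the left side drops out and one is left with an elliptic constraint on $u|_{t=0}$ that need not admit a positive solution for a general $\gamma$, and even if it does, short-time existence from a characteristic initial surface is not supplied by \cite{ShiTam02} or \cite{Miao09}. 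Your suggested remedies --- limits from small $H>0$, barriers, or ``modeling on the Schwarzschild neck'' --- are not carried out. The secondary gluing step is also problematic: matching only $(\gamma_1,H_1)$ across the interface yields a metric that is merely Lipschitz, while ``matching to second order'' would require prescribing the full second fundamental form on the inner boundary of the Theorem~\ref{thm-main} extension, which that construction does not permit.

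The paper avoids both issues with a single global construction and no gluing. It keeps $H^b=2\rho^{-1}>0$ everywhere (the background is $d\rho^2+\rho^2 g_\rho$ with $\tr_{g_\rho}\partial_\rho g_\rho\equiv 0$), but modifies the path so that $g_\rho\equiv\gamma$ on $[1,1+\tfrac12\delta]$. On that inner slab the problem is exactly the one solved by Smith \cite{Smith09}, who supplies a lapse of the form $u=v\,(1-\rho^{-1})^{-1/2}$ with $v>0$ smooth and independent of $\rho$ near $\rho=1$; thus $H_u=H^b/u\to 0$ at $\rho=1$ because $u\to\infty$, not because $H^b=0$. The resulting $g_u$ is smooth and scalar-flat on all of $\Sigma\times[1,\infty)$, and the mass bound follows, exactly as in Theorem~\ref{thm-main}, by applying Proposition~\ref{prop-LRPI} on $[1+\delta,\infty)$.
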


We give two important remarks regarding Theorems \ref{thm-main} and \ref{thm-main-H-zero}.

\begin{remark}
In \cite{M-S}, Mantoulidis and Schoen proved the following result:
if  $\gamma$ is a metric on $ \Sigma$ such that
$ \lambda_1 ( -\Delta_\gamma + K_\gamma) > 0 $, where $\Delta_\gamma$ is the Laplacian of $(\Sigma, \gamma)$,
then given any $\epsilon > 0 $, there exists an asymptotically flat manifold $(M_{_{MS} }, g_{_{MS}}  )$ such that
\begin{itemize}
\item[a)]  $ \gms $  has nonnegative   scalar curvature, and has strictly positive scalar curvature somewhere;
\item[b)]   $\p \Mms$ with the induced metric is isometric to  $(\Sigma, \gamma)$ and the mean curvature function of $\p \Mms$ in
$(\Mms, \gms $) equals $0$; 
\item[c)]  $(\Mms, \gms)$  is foliated by a $1$-parameter family of closed surfaces
with positive mean curvature, and $(\Mms, \gms)$, outside a compact set, is isometric to a spatial Schwarzschild manifold
with mass $m$; and 
\item[d)]   $  \m( \Mms, \gms) = m $ satisfies
\be \label{eq-m-M-S}
m  <  \sqrt{ \frac{ | \Sigma |_{\gamma} }{16 \pi }  } + \epsilon .
\ee
\end{itemize}
In the context of the Bartnik mass, Mantoulidis-Schoen's result shows
\be \label{eq-mb-MS-bound}
\mb (\Sigma, \gamma, 0) \le  \sqrt{ \frac{ | \Sigma |_{\gamma} }{16 \pi }  }
\ee
under the assumption $ \lambda_1 ( -\Delta_\gamma + K_\gamma ) > 0 $.

Comparing Theorem \ref{thm-main-H-zero} with Mantoulidis-Schoen's theorem, one sees 
Theorem \ref{thm-main-H-zero} proves \eqref{eq-mb-MS-bound} under
a much stronger assumption $ K_\gamma > 0$. However, the extension $(\Mms, \gms)$ in \cite{M-S}
has positive scalar curvature somewhere, while $(M, g)$ in Theorem \ref{thm-main-H-zero} has identically zero scalar curvature.

In the context of initial data sets in general relativity,  Theorem \ref{thm-main-H-zero} produces  asymptotically flat, time-symmetric,
vacuum  initial data with an apparent horizon $(\Sigma, \gamma)$ whenever $ K_\gamma > 0$ such that the mass of the initial data is arbitrarily close to the optimal value in the Riemannian Penrose inequality \cite{Bray01, H-I01}.
\end{remark}

\begin{remark}
In light of \cite{M-S}, the  bound \eqref{eq-mb-half-r} on $\mb (\Sigma, \gamma, H)$ for a positive $H$
is well expected and is known among experts (see \cite{Jauregui-18, McCormick-18} for instance).
This is because  the manifold  $(\Mms, \gms)$ from \cite{M-S} is ``almost" an admissible extension of
$(\Sigma, \gamma, H > 0)$, except  the mean curvature of $\p \Mms$ in
$(\Mms, \gms)$ is zero. 
Thus, if one enlarges the class of admissible extensions in the definition of $\mb (\Sigma, \gamma, H)$ by replacing condition ii)
with a condition
\begin{itemize}
\item[$\widetilde{\text{ii}})$]   $\p M$ with the induced metric is isometric to  $(\Sigma, \gamma)$ and  $ H \ge H_{\p M}$
\end{itemize}
and denotes the resulting Bartnik mass by $ \tilde{\m}_{_B}  (\Sigma, \gamma, H) $, then
$(\Mms, \gms)$ would be a legitimate admissible extension of $(\Sigma, \gamma, H)$ and hence, by \eqref{eq-m-M-S},  one will have
\be \label{eq-tilde-mb-half-r}
\tilde{\m}_{_B} (\Sigma, \gamma, H) \le  \sqrt{ \frac{ | \Sigma |_{\gamma} }{16 \pi }  }.
\ee
The geometric meaning of  ``$ H \ge H_{\p M}$" used in condition $\widetilde{\text{ii}})$ can be found in \cite{Miao02}.
Naturally one would like to know if $ \tilde{\m}_{_B} (\Sigma, \gamma, H) $ agrees with   $\mb (\Sigma, \gamma, H)$.
We refer readers to the recent work of Jauregui \cite{Jauregui-18} and McCormick \cite{McCormick-18}
for results pertinent to this question.

Even if $\tilde{\m}_{_B} (\Sigma, \gamma, H) = \mb (\Sigma, \gamma, H)$, we note Theorem \ref{thm-main}  reveals
more information  than \eqref{eq-mb-half-r}. This is again because the extension $(M, g)$ in Theorem \ref{thm-main} has zero scalar curvature.
Suppose  $ K_\gamma > 0 $, if one shrinks the class of
admissible extensions in the definition of $\mb (\Sigma, \gamma, H)$ by replacing condition i)
with a condition
\begin{itemize}
\item[$\widetilde{\text{i}})$]   $g$ has zero scalar curvature
\end{itemize}
and denotes the resulting Bartnik mass by $ {\m}^0_{_B}  (\Sigma, \gamma, H) $, then
 Theorem \ref{thm-main} still applies to show
 \be \label{eq-0-mb-half-r}
  {\m}^0_{_B}  (\Sigma, \gamma, H) \le  \sqrt{ \frac{ | \Sigma |_{\gamma} }{16 \pi }  } .
 \ee
On the other hand, as the metric $(\Mms , \gms) $  has strictly positive scalar curvature somewhere,
it is not clear if the result  in \cite{M-S}  could imply  \eqref{eq-0-mb-half-r}.

In the definition of $\mb(\cdot)$, it is indeed natural to restrict the class of admissible extensions of $(\Sigma, \gamma, H)$
to extensions with identically zero scalar curvature. It was conjectured by Bartnik \cite{Bartnik-qlmass} that,
under suitable assumptions on $(\gamma, H)$, a minimizer $(M, g)$ achieving  $\mb(\Sigma, \gamma, H)$
exists and is a static vacuum initial data set.
For this reason, it is reasonable to consider the revised variational problem of minimizing mass 
over extensions with zero scalar curvature. 
Besides Theorem \ref{thm-S-T} of Shi-Tam, prior results on estimating the Bartnik mass by constructing 
scalar flat extensions using PDE methods were also given by Lin and Sormani \cite{L-S14}. 
\end{remark}

We end this section by comparing estimate  \eqref{eq-mb-half-r} and \eqref{eq-mb-mby}.
By the classic Minkowski inequality, if $\Sigma$ is a closed convex surface in $ \R^3$ 
with intrinsic metric $\gamma$ and mean curvature $H_0$, then
$$
\frac{1}{8\pi} \int_\Sigma H_0 \ d \sigma \ge \sqrt{ \frac{ | \Sigma |_\gamma} { 4 \pi } } .
$$
Thus, if $H > 0$ is relative small, i.e.
if  $ \displaystyle \frac{1}{8\pi} \int_\Sigma H \ d \sigma <   \sqrt{ \frac{ | \Sigma |_\gamma} { 16 \pi } } $, then
\be
\frac{1}{8 \pi} \int_\Sigma ( H_0 - H) \, d \sigma > \sqrt{ \frac{ | \Sigma |_\gamma} { 16 \pi } } .
\ee
In this case,  \eqref{eq-mb-half-r} is an estimate  sharper than  \eqref{eq-mb-mby}.
On the other hand, if $H $ is close to $H_0$,  then    \eqref{eq-mb-mby} represents  a better estimate.

\section{Proof of Theorems \ref{thm-main} and \ref{thm-main-H-zero}}

For simplicity, replacing $(\gamma, H)$ with $(c^2 \gamma, c^{-1} H )$ for a constant $c > 0 $, we may  assume
 $ | \Sigma |_\gamma = 4 \pi $.

We first prove Theorem \ref{thm-main} in which $ H $ is a positive function on $\Sigma$.
Suppose $K_\gamma > 0 $.  Let  $\{ g(t) \}_{0 \le t \le 1} $ be a fixed smooth path of metrics
 on $ \Sigma$ satisfying
\begin{enumerate}
\item[(i)]  $g(0) = \gamma $, $g(1) $ is a round metric;
\item[(ii)] $K_{ g(t) } > 0 $, $ \forall \  t \in [0,1]$; and
\item[(iii)] $ \tr_{g(t)} g'(t) = 0 $, $ \forall \ t \in [0,1]$.
\end{enumerate}
(Existence of such a path can be given by the proof of \cite[Lemma 1.2]{M-S} for instance.)
It follows from conditions (i) and (iii) that $g(1) = \sigma$, a standard round metric on $ \Sigma$ with area $4\pi$.

Given any $\delta > 0 $, let
$$ t_\delta  = t_\delta ( \rho) : [1, \infty) \rightarrow [0, 1] $$
be a smooth function
such that
\be \label{eq-t-delta-rho-1}
t_\delta (1) = 0, \ \ \mathrm{and} \ \ t_\delta ( \rho ) = 1, \ \forall \ \rho \ge 1 + \delta.
\ee
Define
\be \label{eq-g-delta-rho}
g_{\delta, \rho} = g (  t_\delta ( \rho ) ) , \ \rho \in [1, \infty) .
\ee

In what follows, we suppress the notation $\delta$ and denote  $ g_{\delta, \rho}$ by $ g_\rho $.
$ \{ g_\rho \}_{1 \le \rho < \infty} $ is a smooth family of metrics on $\Sigma$ satisfying
\be
g_1 = \gamma, \ \mathrm{and} \
g_\rho = \sigma, \ \forall \ \rho \ge 1 + \delta .
\ee
On $ N =  [1, \infty) \times \Sigma $, consider a background metric
\be
g^b = d \rho^2 + \rho^2 g_\rho .
\ee
Let  $N_{\delta} = [1 + \delta, \infty) \times \Sigma $, then
\be
g^b = d \rho^2 + \rho^2 \sigma = g_{_E}, \ \mathrm{on} \ N_\delta,
\ee
where $ g_{_E}$ is the standard Euclidean metric.

For each $\rho$, let $H^b$ and $ A^b$  denote the mean curvature and the second fundamental form
of $ \Sigma_\rho = \{ \rho \} \times \Sigma$
in $(N, g^b)$ with respect to $ \frac{\p}{\p \rho}$, respectively. Then
\be
H^b = 2 \rho^{-1} + \frac12 \tr_{g_\rho} (\p_\rho g_\rho ) = 2 \rho^{-1},
\ee
where we used (iii).
Given any function $ u = u (\rho, x) > 0 $ on $N$, following Bartnik \cite{Bartnik93} and Shi-Tam \cite{ShiTam02},
we consider  the metric
\be
g_u = u^2 d \rho^2 + \rho^2 g_\rho .
\ee
The induced metric from $g_u $ on $ \Sigma_1 $, which is identified with $ \Sigma$, is $g_1  = \gamma $.
If $ H_u $ denotes the mean curvature of $ \Sigma_\rho $ in $(N, g_u)$,
then
\be
 H_u = u^{-1} H^b > 0 .
\ee

The following claim follows directly from results in \cite{ShiTam02} and \cite{EMW09}.

\begin{prop} \label{prop-u-existence}
Given the pair $(\gamma, H)$ on $ \Sigma$,
there exists a function $ u  > 0 $ on $N$ such that
\begin{itemize}
\item $ g_u = u^2 d \rho^2 + \rho^2 g_\rho $  has zero scalar curvature, and
\item $ H_u = H $ at $ \Sigma = \p N$.
\end{itemize}
Moreover, $ u \to 1 $ as $ \rho \to \infty$ and $ (N, g_u)$ is asymptotically flat and is foliated
by $\{ \Sigma_\rho \}_{1 \le \rho < \infty}$ with positive mean curvature.
\end{prop}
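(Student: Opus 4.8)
The plan is to realize the vanishing scalar curvature condition for $g_u = u^2\,d\rho^2 + \rho^2 g_\rho$ as a parabolic initial value problem for $u$ in the variable $\rho$, exactly as in the Shi--Tam construction of Theorem \ref{thm-S-T}, and then to solve it globally by splicing the moving-metric region $[1,1+\delta]$ to the Euclidean region $[1+\delta,\infty)$. First I would record the scalar curvature identity for a metric of this foliated form. Writing $K_\rho$, $\Delta_\rho$ for the Gauss curvature and Laplacian of the slice $(\Sigma_\rho, \rho^2 g_\rho)$, $\nu = u^{-1}\partial_\rho$ for the unit normal, and $A_u = u^{-1}A^b$, $H_u = u^{-1}H^b$ for the second fundamental form and mean curvature of $\Sigma_\rho$ in $(N,g_u)$, the traced Gauss equation together with the lapse form of the mean curvature evolution gives
\[
R_{g_u} = 2K_\rho - 2u^{-1}\partial_\rho H_u - |A_u|^2 - H_u^2 - 2u^{-1}\Delta_\rho u .
\]
Substituting $H_u = u^{-1}H^b$ and imposing $R_{g_u} = 0$, this becomes the quasilinear equation
\[
H^b\,\partial_\rho u = u^2\,\Delta_\rho u + \Big(\tfrac12 |A^b|^2 + \tfrac12 (H^b)^2 + \partial_\rho H^b\Big)\,u - K_\rho\,u^3 ,
\]
whose coefficients are determined by the background $g^b = d\rho^2 + \rho^2 g_\rho$. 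Because $H^b = 2\rho^{-1} > 0$ by condition (iii), the leading coefficient $u^2/H^b$ is positive, so the equation is forward parabolic in $\rho$.

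Next I would set up the initial value problem. Prescribing $H_u = H$ at $\Sigma = \{\rho = 1\}$ is equivalent, through $H_u = u^{-1}H^b$ and $H^b|_{\rho=1} = 2$, to the initial condition $u(1,\cdot) = 2/H$, which is smooth and strictly positive since $H > 0$. One then solves the parabolic equation forward from $\rho = 1$, so that the whole construction reduces to establishing global existence, positivity, and the correct asymptotics of $u$ on $[1,\infty)\times\Sigma$.

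For global existence I would split the $\rho$-interval at $1+\delta$. On the finite cylinder $[1,1+\delta]\times\Sigma$ the slice metrics $\rho^2 g_\rho$ genuinely vary, but the equation is uniformly parabolic with smooth coefficients on a compact interval, so short-time existence and continuation give a solution up to $\rho = 1+\delta$, the maximum principle keeping $u$ strictly positive; this is precisely the solvability established in \cite{EMW09} for the Shi--Tam equation over a moving, non-round background. On $[1+\delta,\infty)\times\Sigma$ the slice metrics are frozen at the round metric, $g^b = d\rho^2 + \rho^2\sigma = g_{_E}$ is flat, and the problem is \emph{verbatim} the Euclidean Shi--Tam problem of Theorem \ref{thm-S-T} with initial data $u(1+\delta,\cdot)$ inherited from the first region. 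Thus \cite{ShiTam02} supplies the global solution on $[1+\delta,\infty)$, its positivity, the asymptotics $u \to 1$, asymptotic flatness, and a well-defined ADM mass. Positivity of the foliation is then immediate, since $H_u = u^{-1}H^b = 2(u\rho)^{-1} > 0$ for every $\rho$.

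I expect the only genuinely delicate point to be the matching region $[1,1+\delta]$, where the moving slice metrics prevent one from simply quoting Shi--Tam's Euclidean analysis; here the a priori positivity and boundedness of $u$ on a compact $\rho$-interval, guaranteeing that the solution reaches $\rho = 1+\delta$ without degenerating, is the crux, and this is exactly what the refinement in \cite{EMW09} provides. Once $u(1+\delta,\cdot)$ is in hand, the hard long-time existence and asymptotic analysis are not new: they are Shi and Tam's, applicable because the background is already Euclidean beyond $1+\delta$. Assembling the two pieces yields the function $u$ asserted in the proposition.
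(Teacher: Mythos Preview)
Your proposal is correct and follows essentially the same route as the paper: derive the Shi--Tam parabolic equation for $u$, invoke \cite{EMW09} for existence over the moving-metric region, and then apply \cite{ShiTam02} for the asymptotics once the background is Euclidean on $N_\delta$. The one point you leave implicit, which the paper highlights, is that the hypothesis $K_{g_\rho}>0$ (coming from condition (ii) on the path $\{g(t)\}$) is precisely what makes the cubic term $-K_\rho u^3$ a damping term and prevents finite-$\rho$ blowup of $u$; this, rather than compactness of $[1,1+\delta]$ alone, is the structural input behind the \cite{EMW09} existence result you cite.
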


\begin{proof}
We recall the PDE that $u$ needs to satisfy so that $ g_u$ has zero scalar curvature.
Let $ \tilde g_\rho = \rho^2 g_\rho $.  By \cite[Equation (1.10)]{ShiTam02} (also see \cite[Equation (5)]{EMW09}),
\be \label{eq-EMW}
H^b \p_\rho  u = u^2 \Delta_{\tilde g_\rho}  u - K_{\tilde g_\rho}  u^3 + \frac{1}{2} \left[ (H^b)^2 + | A^b |^2_{\tilde g_\rho} + 2
\p_\rho H^b \right] u .
\ee
Since $ H^b > 0 $ and $ K_{\tilde g_\rho}  = \rho^{-2} K_{g_\rho} > 0 $ for every $\rho$,
 a positive solution  $u$ with an initial condition $ u |_{\Sigma} = H^{-1} H^b $ exists on $[1, T] \times \Sigma$ for all $ T > 0$
 by \cite[Proposition 2]{EMW09}.
Since  $ g^b $ is the Euclidean metric on $N_{\delta}$,
the claim on the asymptotic behavior of $u$ follows from \cite[Theorem 2.1]{ShiTam02}.
\end{proof}

Let $ u $ be given in  Proposition \ref{prop-u-existence},  $(N, g_u)$
is an admissible extension of $(\Sigma, \gamma, H)$. By definition,
\be \label{eq-B-mass-by-gu}
\mb (\Sigma, \gamma, H) \le \m ( g_u) .
\ee

We want to estimate $\m (g_u) $ in terms of data at $ \Sigma_{1 + \delta} = \p N_{ \delta} $.
Initially, we could apply \cite[Lemma 4.2]{ShiTam02} to have
\be
 \int_{\Sigma_\rho} (H^b - H_u) \, d \sigma_\rho \ \mathrm{is \ monotone \ nonincreasing \ in \ }
 \rho \in [1+\delta, \infty) ,
\ee
where $ d \sigma_\rho$ is the area measure on $(\Sigma_\rho, \rho^2 g_\rho)$.
Also, by \cite[Theorem 2.1]{ShiTam02},
\be
\lim_{\rho \to \infty} \frac{1}{8\pi} \int_{\Sigma_\rho} (H^b - H_u) \, d \sigma_\rho  = \m (g_u) .
\ee
Thus,
\be \label{eq-mass-gu-est-st}
\begin{split}
\m (g_u) \le & \ \frac{1}{8\pi} \int_{\Sigma_{1+ \delta}} (H^b - H_u)  \, d \sigma_{1+\delta} \\
= & \ (1+\delta) -  \frac{1}{8\pi} \int_{\Sigma_{1+ \delta}} H_u \, d \sigma_{1+\delta} .
\end{split}
\ee
Since $ H_u > 0 $, omitting the term involving $ H_u$, one has
$$ \m (g_u)  < 1 + \delta  .$$
However, this estimate   is not sufficient to show \eqref{eq-mass-M-S-main}.

To verify \eqref{eq-mass-M-S-main}, we make use of a refined monotonicity property concerning $(N_\delta, g_u)$.
The next proposition is essentially a restatement of  results from \cite[Proposition 1 and Equation (56)]{Miao09}.

\begin{prop} [\cite{Miao09}]  \label{prop-LRPI}
Given any constant $ r > 0 $,  let $E_r = \R^3 \setminus \{ | x | < r \}$.
Let
$ g^b_{0} = d \rho^2 + \rho^2 \sigma $ denote a background  Euclidean metric on $E_r$.
On $E_r$, let $ u > 0 $ be a function such that
the metric
$ g_u = u^2 d \rho^2 + \rho^2 \sigma $
has zero scalar curvature and hence is asymptotically flat.
Let $ H_u = 2 \rho^{-1} u^{-1}  $ be the mean curvature
of $ S_\rho = \{ | x | = \rho \}$ in $(E_r, g_u)$.
Then, for any constant $ m \in (-\infty, \frac12 r]$,
\be \label{eq-LRPI-est-1}
\m(g_u) \le m + \frac{1}{8\pi} \int_{S_r} (2 r^{-1} N - H_u) N  \, d \sigma_r
\ee
where $ N = \sqrt{ 1 - \frac{2m}{r} }$.
As a result, by minimizing its right side over $ m $, one has
\be \label{eq-mass-gu-refined}
\m (g_u) \le \sqrt{ \frac{|S_r|}{16 \pi} } \left[ 1 - \frac{1}{16 \pi | S_r| } \left( \int_{S_r} H_u \, d \sigma_r \right)^2 \right] .
\ee
\end{prop}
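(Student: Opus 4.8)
The plan is to run a Shi--Tam type monotonicity argument, but with the flat reference replaced by a spatial Schwarzschild reference of mass $m$. Fix $m \in (-\infty, \tfrac12 r]$ and set $N_\rho = \sqrt{1 - 2m/\rho}$, which is real and nonnegative for all $\rho \ge r$ precisely because $2m \le r \le \rho$; this is where the restriction on $m$ enters. The sphere $S_\rho$ has mean curvature $H_u = 2\rho^{-1}u^{-1}$ in $(E_r, g_u)$, while in the Schwarzschild metric $N_\rho^{-2}\, d\rho^2 + \rho^2 \sigma$ it has mean curvature $H_m := 2\rho^{-1} N_\rho$. I would then introduce
\be
\Phi(\rho) = m + \frac{1}{8\pi} \int_{S_\rho} (H_m - H_u)\, N_\rho \, d\sigma_\rho ,
\ee
and observe that $\Phi(r)$ is exactly the right-hand side of \eqref{eq-LRPI-est-1}, since $N_r = N$ and $H_m = 2r^{-1}N$ at $\rho = r$. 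The whole proposition then reduces to two claims: that $\lim_{\rho\to\infty}\Phi(\rho) = \m(g_u)$, and that $\Phi$ is monotone non-increasing on $[r,\infty)$; together they give $\m(g_u) = \lim_{\rho\to\infty}\Phi(\rho) \le \Phi(r)$, i.e. \eqref{eq-LRPI-est-1}.

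For the limit I would split $H_m - H_u = (H_0 - H_u) - (H_0 - H_m)$ with $H_0 = 2\rho^{-1}$. Since $N_\rho$ is constant on each $S_\rho$, the first piece contributes $N_\rho \int_{S_\rho}(H_0 - H_u)\,d\sigma_\rho$, which tends to $8\pi\,\m(g_u)$ by the Shi--Tam mass formula in \cite[Theorem 2.1]{ShiTam02} together with $N_\rho \to 1$; the second piece equals $8\pi\rho(1 - N_\rho)N_\rho$, which tends to $8\pi m$ from $1 - N_\rho = m\rho^{-1} + O(\rho^{-2})$. Hence the integral tends to $8\pi(\m(g_u) - m)$ and $\Phi(\rho)\to m + (\m(g_u)-m) = \m(g_u)$.

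The heart of the argument, and the step I expect to be most delicate, is the monotonicity. Here I would exploit that on $E_r$ the background is exactly Euclidean, so every $S_\rho$ is round with $K_{\rho^2\sigma} = \rho^{-2}$ constant, and \eqref{eq-EMW} collapses to $2\rho\,\partial_\rho u = u^2 \Delta_\sigma u + u(1-u^2)$, where $\Delta_\sigma$ is the Laplacian of the unit round sphere and $\int_{S^2}\Delta_\sigma u\,d\omega = 0$ (with $d\omega$ the area element of $(S^2,\sigma)$ and $d\sigma_\rho = \rho^2 d\omega$). Writing the integral in \eqref{eq-LRPI-est-1} as $2\rho N_\rho \int_{S^2}(N_\rho - u^{-1})\,d\omega$ and differentiating in $\rho$, I would substitute $\partial_\rho u$ from the equation and use the elementary identities $N_\rho N_\rho' = m\rho^{-2}$ and $m\rho^{-1} = \tfrac12(1 - N_\rho^2)$. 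The expected outcome is the clean expression
\be
\frac{d}{d\rho} \left[ \int_{S_\rho} (H_m - H_u) N_\rho \, d\sigma_\rho \right]
= \int_{S^2} \left[\, 2 - (N_\rho u)^{-1} - N_\rho u \,\right] d\omega ,
\ee
whose integrand equals $-\big(\sqrt{N_\rho u} - (N_\rho u)^{-1/2}\big)^2 \le 0$ by the arithmetic--geometric mean inequality. The one computation that must be carried out carefully is the cancellation of all the $N_\rho$-dependent terms into this single AM--GM expression; the positivity of $u$ and the realness of $N_\rho$ (guaranteed by $m \le \tfrac12 r$) are exactly what make this final step valid.

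Finally, for \eqref{eq-mass-gu-refined} I would minimize the right-hand side of \eqref{eq-LRPI-est-1} over $m$. Since $N$ is constant on $S_r$ one has $\int_{S_r} 2r^{-1}N\,d\sigma_r = 8\pi r N^2$ and $rN^2 = r - 2m$, so the bound reduces to the quadratic $\tfrac{r}{2}N^2 - \tfrac{1}{8\pi}\big(\int_{S_r}H_u\,d\sigma_r\big)N + \tfrac{r}{2}$ in the variable $N \in [0,\infty)$. Its minimizer $N^\ast = \tfrac{1}{8\pi r}\int_{S_r}H_u\,d\sigma_r$ is nonnegative, hence admissible (the range $m \in (-\infty,\tfrac12 r]$ corresponds to $N \in [0,\infty)$), and substituting $N^\ast$ back, together with $\sqrt{|S_r|/(16\pi)} = r/2$ and $|S_r| = 4\pi r^2$, yields exactly \eqref{eq-mass-gu-refined}.
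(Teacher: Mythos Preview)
Your proposal is correct and follows exactly the paper's approach: a Schwarzschild reference of mass $m$, the monotonicity of $\int_{S_\rho}(H^b_m - H_u)N_\rho\,d\sigma_\rho$, the limit identifying this quantity with $8\pi(\m(g_u)-m)$, and then minimization over $m$. The only difference is that where the paper simply cites \cite[Proposition~1]{Miao09} for the monotonicity and the limit, you supply the explicit computations (your AM--GM identity is precisely the content of that proposition in this spherically symmetric background), and your minimization as a quadratic in $N$ is identical to the paper's completion of the square after the substitution $N^2 = 1 - 2m/r$.
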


\begin{proof}
Given any $ m \in ( - \infty, \frac12 r] $,  consider a  background
Schwarzschild metric
\be
g^b_m = \frac{1}{ 1 - \frac{2m}{\rho} }  d \rho^2 + \rho^2 \sigma
\ee
on $E_r$.
 Let $ H^b_m $ be the mean curvature
of $S_\rho $ in $(E_r, g^b_m )$. By \cite[Proposition 1]{Miao09},
\be \label{eq-LRPI-est-2}
\int_{ S_\rho }  (H^b_m - H_u) N(\rho) \, d \sigma_\rho \ \mathrm{is \ monotone \ nonincreasing \ in \ }
\rho \in [r , \infty),
\ee
where $N  (\rho) =  \sqrt{1 - \frac{2m}{\rho} } $, and
\be \label{eq-LRPI-est-3}
\lim_{\rho \to \infty} \frac{1}{8 \pi} \int_{S_\rho}  (H^b_m - H_u) N(\rho) \, d \sigma_\rho = \m( g_u) - m .
\ee
Thus, by \eqref{eq-LRPI-est-2}, \eqref{eq-LRPI-est-3} and the fact $ H^b_m = 2 \rho^{-1} N(\rho)$,
\be \label{eq-LRPI-est-4}
\begin{split}
\m (g_u) \le & \  m + \frac{1}{8 \pi} \int_{S_r}  (H^b_m - H_u) N \, d \sigma_r  \\
= & \ m + \frac{1}{8 \pi} \int_{S_r}  ( 2 r^{-1} N - H_u) N \, d \sigma_r  .
\end{split}
\ee
Denote the right side of \eqref{eq-LRPI-est-4} by $\Phi(m)$, i.e.
\be
\Phi (m) =  m  + r N^2 -  \left( \frac{1}{8 \pi} \int_{S_r}   H_u \, d \sigma_r  \right) N .
\ee
Since $ N^2 = 1 - \frac{2m}{r}$, one can rewrite $\Phi(m)$ as
\be
\Phi (m)
=  \frac12 r \left[ N - \frac{1}{8 \pi r} \left( \int_{S_r} H_u \, d \sigma_r \right) \right]^2
+ \frac{1}{2} r \left[ 1 -   \left( \frac{1}{8\pi r } \int_{S_r} H_u \, d \sigma_r \right)^2 \right] .
\ee
Minimizing $\Phi (m)$ over $ m \in ( - \infty, \frac12 r]$,  one has
\be
\min_{m} \Phi (m) = \frac{1}{2} r \left[ 1 -   \left( \frac{1}{8\pi r } \int_{S_r} H_u \, d \sigma_r \right)^2 \right] .
\ee
This explains \eqref{eq-mass-gu-refined}.
\end{proof}

\begin{remark}
A monotonicity generalizing \eqref{eq-LRPI-est-2} with the background Schwarzschild metric replaced by 
a general static metric can be found in \cite{Lu-Miao}.
\end{remark}

We resume to prove  \eqref{eq-mass-M-S-main}. Applying Proposition \ref{prop-LRPI}
to $(N_{\delta}, g_u)$, we have
\be \label{eq-mass-gu-est-2}
\m (g_u)  \le
\sqrt{ \frac{|\Sigma_{1+\delta} | }{16 \pi} }
\left[ 1 - \frac{1}{16 \pi | \Sigma_{1 + \delta } | } \left( \int_{\Sigma_{1+\delta}} H_u \, d \sigma_{1+\delta}  \right)^2 \right] .
\ee
Omitting the term involving $ H_u$, we conclude from \eqref{eq-mass-gu-est-2} that
\be \label{eq-b-mass-3}
\m (g_u) < \sqrt{ \frac{|\Sigma_{1+\delta} | }{16 \pi} }  = \frac12 (1 + \delta) .
\ee
Since $ \delta$ can be arbitrarily small, \eqref{eq-mass-M-S-main}  follows from \eqref{eq-b-mass-3} and rescaling.
Theorem \ref{thm-main} is proved.

\medskip

Next,  we prove Theorem \ref{thm-main-H-zero} in which $H \equiv 0 $. In this case, to obtain a solution $ u$ 
satisfying Proposition \ref{prop-u-existence}, one needs
to solve \eqref{eq-EMW} with an initial condition $ u |_\Sigma = \infty$. While this might be achieved by following the proofs
in \cite{Bartnik93, ShiTam02}, we revise our choice of $ \{g_\rho\}$ to
apply a result of Smith \cite{Smith09, Smith11} which imitates  the horizon in a spatial Schwarzschild manifold.
More precisely,  given any small $\delta > 0 $, we modify the choice of $t_\delta (\rho) $ in \eqref{eq-t-delta-rho-1}
by re-defining it so that
\be \label{eq-t-delta-rho-2}
\left\{
\begin{array}{ll}
t_\delta (\rho) = 0,  & \  \text{if} \ 1 \le \rho \le 1 + \frac{1}{2}\delta, \\
t_\delta (\rho) = 1, & \   \text{if} \ \rho \ge 1 + \delta .
\end{array}
\right.
\ee
With this  choice of $t_\delta (\rho)$, define $ g_{\rho} = g ( t_\delta (\rho) ) $ as in \eqref{eq-g-delta-rho}.
Then $\{ g_\rho \}_{\rho \ge 1}$ satisfies
\be \label{eq-g-delta-rho-2}
\left\{
\begin{array}{ll}
g_\rho = \gamma,  & \  \text{if} \ 1 \le \rho \le 1 + \frac{1}{2}\delta, \\
g_\rho = \sigma , & \   \text{if} \ \rho \ge 1 + \delta .
\end{array}
\right.
\ee
On $ N =  [1, \infty) \times \Sigma $, consider the  background metric
$ g^b = d \rho^2 + \rho^2 g_\rho $.  Applying \cite[Main Theorem]{Smith09},
 there exists a function $ u$  of the form
\be
u = \frac{v}{ \sqrt{ 1 - \frac{1}{\rho} } },
\ee
where $ v > 0$ is a smooth function on $N$, independent on $ \rho $ in $[1, 1 + \frac{1}{2} \delta]$,
such that the metric
$$ g_u = u^2 d \rho^2 + \rho^2 g_\rho $$
has zero scalar curvature. This $(N, g_u)$ satisfies Proposition \ref{prop-u-existence} with $H \equiv 0 $.
The rest of the proof is then identical to that following Proposition \ref{prop-LRPI} above.
This proves Theorem \ref{thm-main-H-zero}.

\begin{remark}
In Theorems \ref{thm-main} and \ref{thm-main-H-zero}, the assumption $ K_\gamma > 0$  is imposed to have $ K_{g(t)} > 0$, which is  to
guarantee the existence of the solution $u$ to \eqref{eq-EMW}.
\end{remark}

\begin{remark}
By  (iii) and  \eqref{eq-g-delta-rho}, equation \eqref{eq-EMW} on $ [1, \infty) \times \Sigma$  takes an explicit form of 
\be \label{eq-EMW-2}
2 \rho \, \p_\rho  u = u^2  \Delta_{g_\rho}  u -  K_{g_\rho}  u^3 + 
 \left[ 1 + \frac18 \rho^2 \left | \frac{d g}{dt} \right |_g^2 \left( \frac{d t_\delta }{d \rho} \right)^2   \right] u .
\ee
On $[1 + \delta, \infty) \times \Sigma$, it reduces to
\be \label{eq-EMW-3}
2 \rho \, \p_\rho  u = u^2  \Delta_{\sigma}  u  +   ( u - u^3)  ,
\ee
which is the equation in \cite[Example 1]{ShiTam02}.
\end{remark}

\vh

\section*{Acknowledgements}
Dr. Luen-Fai Tam is an inspiration  and a mentor to both of us. It is our great pleasure to dedicate this paper
to him on the occasion of his 70th birthday.

\vh

We also thank Dr. Christos Mantoulidis for his helpful comments on a preliminary version of this paper.

\end{document}